\documentclass[journal,10pt]{IEEEtran}
\usepackage{amsmath,amssymb,mathtools,bm,amsthm,graphicx,cite,microtype}
\newtheorem{theorem}{Theorem}

\newtheorem{corollary}{Corollary}
\newtheorem{lemma}{Lemma}
\newcommand{\E}{\mathbb E}
\newcommand{\Pp}{\mathbb P}
\newcommand{\cC}{\mathcal C}
\newcommand{\cS}{\mathcal S}
\newcommand{\A}{\mathcal A}

\usepackage[hidelinks]{hyperref}
\usepackage{orcidlink}

\title{Soft GRAND under Channel Switching and Drift}
\author{Behrooz~Razeghi\textsuperscript{\orcidlink{0000-0001-9568-4166}},
\IEEEmembership{Senior~Member,~IEEE}%
\thanks{B. Razeghi is with Harvard University, Cambridge, MA, USA (e-mail: behroozrazeghi@seas.harvard.edu).}
\thanks{This work was supported by the Swiss National Science Foundation (SNSF) under Grant No.~222339.}}%

\begin{document}
\maketitle

\begin{abstract}
Under channel switching or drift, the posterior used to order soft GRAND queries can differ from the matched correction posterior, which can increase rank and finite-budget decoding error. We bound log query rank by matched posterior self-information plus positive log-posterior mismatch; exact random-subset collision probabilities yield GRANDAB error bounds. For switching among memoryless channels with capacity-achieving uniform input, a state-path mixture yields vanishing error uniformly over admissible paths below the minimum constituent capacity when log path-class size is sublinear. For drift, pilot refresh bounds mismatch and yields the continuous minimizer of a tracking upper bound. Generalized-Gaussian BPSK experiments evaluate both.
\end{abstract}

\begin{IEEEkeywords}
GRAND, soft decoding, channel switching, channel tracking, pilot estimation, mismatched decoding.
\end{IEEEkeywords}

\section{Introduction}

When the channel changes within a codeword or across successive blocks, the query order used by soft GRAND can cease to match the current correction posterior. The realized correction can then receive a larger query rank, which can increase the probability that a finite-budget decoder abandons before reaching it or accepts a competing codeword first. Likelihood-ordered GRAND and soft GRAND provide the matched hard- and soft-information benchmarks \cite{duffy2019grand,solomon2020soft}, whereas ORBGRAND constructs an
observation-dependent query order from ordered reliability information \cite{duffy2022ordered}. The temporal problem considered here differs from static channel-parameter uncertainty because channel variation changes the posterior defining the matched query order. In the finite-budget analysis, estimation error or an outdated channel model enters through the resulting posterior mismatch and its effect on the rank assigned to the realized correction.

Existing results address different components of this problem. Channel memory has been incorporated into GRAND through Markov-noise ordering \cite{an2022bursts}, time-varying Markov-parameter models \cite{qin2022pss}, correlation-aware soft ordering \cite{duffy2023ai}, exact likelihood ordering for correlated Gaussian channels with structured precision matrices \cite{razeghi2026lpgrand}, sequence-reliability ordering for Gaussian ISI channels \cite{li2026isi}, and finite-memory posterior-tail calculations under a specified posterior model \cite{razeghi2026tcgrand}. Channel estimation and decoder adaptation are studied through estimation-assisted, joint-estimation, pilot-assisted, and closed-loop methods \cite{millward2024estimation,wiame2025channel,bodet2026pilots, krishnamachari2026closedloop}. Sensing-aided ORBGRAND combines environmental side information with pilot observations to improve the channel estimate and the log-likelihood ratios used to construct its reliability-based query order \cite{ge2026sensing}.

Universal and mismatched noise-guessing analyses establish guarantees for finite-state additive channels and static mismatched metrics \cite{miyamoto2025finite,miyamoto2026mismatch}. Mismatched guesswork quantifies finite-block ordering loss \cite{mariona2023mismatch}, while conditional guesswork characterizes rank-tail exponents with side information \cite{girish2026conditional}. Related finite-block work develops guesswork-based coding bounds, ORBGRAND finite-blocklength results, and fixed-budget error and stopping formulas \cite{mariona2024finite, li2026finiteorb,wan2026performance}; reliability-rank companding provides a separate route to symmetric-capacity recovery when the reliability distribution is known \cite{li2026cdf}. These results establish guarantees for known channel memory or correlation, channel estimation, static uncertainty or mismatch, and finite-budget ordering, but they do not provide a temporal posterior-mismatch condition that simultaneously controls correction rank and finite-budget decoding error under switching or drift. We formulate such a condition through positive log-posterior mismatch and derive separate consequences for within-block switching and
pilot-tracked block-to-block drift.

We study temporal posterior mismatch directly. Let $p$ denote the matched correction posterior and $Q$ the posterior used to order queries. For realized correction $E$ and observation $R$, the rank--mass inequality gives\vspace{-5pt}
\begin{equation}
\log G_Q(E|R) \le -\log p(E|R) + \left[\log\frac{p(E|R)}{Q(E|R)}\right]^+, 
\label{eq:intro-interface}
\end{equation}
where $-\log p(E|R)$ is the matched posterior self-information, and the second term is the positive log-posterior mismatch. Thus, the mismatch bounds the amount by which $\log G_Q(E|R)$ can exceed the matched self-information benchmark. Combining \eqref{eq:intro-interface} with the exact finite-population collision probability for random-subset codes yields a finite-budget GRANDAB error bound.

We obtain two temporal guarantees. First, for within-block switching among a fixed finite family of memoryless channels, an equal-weight mixture over admissible state-path posteriors satisfies the pointwise mismatch bound $\log|\mathcal S_n(S_n)|=\mathcal{O}(1+S_n\log n)$. If $S_n\log n=o(n)$, this penalty is sublinear in blocklength, and for every $R<\min_k C(W_k)$ there exists a common deterministic code sequence with vanishing error uniformly over the declared switching family, provided uniform input achieves $C(W_k)$ for every constituent channel. Second, for block-to-block drift, refreshing a pilot-based parameter estimate every $q$ blocks gives average tracking error at most $\varepsilon_m+qV_B/B$; under the posterior-sensitivity condition, the corresponding average positive log-posterior mismatch is at most $L_{\rm post}(\varepsilon_m+qV_B/B)$. If $\varepsilon_m\le C_{\rm est}m^{-1/2}$, $\phi=m/(nq)$ is fixed, and $v=V_B/B>0$, the unique continuous minimizer of this tracking upper bound is\vspace{-5pt}
\begin{equation}
q^\star= \left( \frac{C_{\rm est}}{2v\sqrt{\phi \, n}} \right)^{2/3}.
\label{eq:intro-qstar}
\end{equation}
Small-block generalized-Gaussian (GGN) BPSK experiments evaluate the switching mixture and pilot-refresh rule using Monte Carlo outer averaging and exact conditional random-subset GRANDAB calculations.

\section{Soft GRAND Model and Finite-Block Rank Control}

Let $(\A,\oplus)$ be a finite Abelian group with $|\A|=a$, and use base-$a$ logarithms. Let $\ominus$ denote the corresponding group subtraction, applied componentwise on $\A^n$. In block $b$, let $\cC_{n,b}\subset\A^n$ be a codebook of size $M_n$, and let the channel have density $w_{b,n}(r|x)$ with respect to a common reference measure on the observation space. Fix a deterministic reference decision $h_n(r)\in\A^n$. For each observation $r$, the mapping $x\mapsto e=x\ominus h_n(r)$ is a bijection on $\A^n$; hence the transmitted word induces the correction $E_b=X_b\ominus h_n(R_b)$.

Under the auxiliary prior $X\sim\mathrm{Unif}(\A^n)$, define the correction posterior
\begin{equation}
p_{b,n}(e|r) = \frac{w_{b,n}(r|h_n(r)\oplus e)}
{\sum_{u\in\A^n}w_{b,n}(r|h_n(r)\oplus u)}.
\label{eq:posterior}
\end{equation}
For any fixed codebook and observation, ordering the candidate corrections by nonincreasing $p_{b,n}(\cdot|r)$ is equivalent to ordering the corresponding codewords by nonincreasing likelihood, because the denominator in \eqref{eq:posterior} is independent of $e$.

More generally, let $Q_n(\cdot|r)$ be any probability mass function on $\A^n$, queried in nonincreasing order with a fixed deterministic tie rule, and let $G_Q(e|r)$ denote the rank of $e$. Then
\begin{equation}
G_Q(e|r)Q_n(e|r)\le1.
\label{eq:rankmass}
\end{equation}
Indeed, each of the first $G_Q(e|r)$ corrections has $Q_n$-mass at least $Q_n(e|r)$.

Now draw $\cC_n$ uniformly from all $M_n$-subsets of $\A^n$ and choose the transmitted word uniformly from $\cC_n$. Conditioned on $X=x$, the remaining $M_n-1$ codewords form a uniform $(M_n-1)$-subset of the other $N-1$ ambient words, where $N=a^n$. Suppose that the true correction has rank $d$ under a codebook-independent query order and that GRAND with abandonment (GRANDAB) stops after $\tau$ queries. Since exactly $d-1$ corrections precede the true correction, the conditional ensemble-averaged decoding-error probability, with abandonment counted as error, is
\begin{equation}
P_{\rm e}^{\rm sub}(d;\tau) =
\begin{cases}
1-\dfrac{\binom{N-d}{M_n-1}}{\binom{N-1}{M_n-1}}, & d\le\tau,\\[1mm]
1, & d>\tau.
\end{cases}
\label{eq:exactcollision}
\end{equation}

For false acceptance, define $s(d,\tau)=\min\{d-1,\tau\}$, the number of queried corrections preceding either the true correction or abandonment. The corresponding conditional ensemble-averaged false-acceptance probability is
\begin{align}
P_{\rm false}^{\rm sub}(d;\tau) &= 1- \frac{\binom{N-1-s(d,\tau)}{M_n-1}}{\binom{N-1}{M_n-1}} \notag\\
&\le \min\!\left\{ 1, \frac{(M_n-1)s(d,\tau)}{N-1} \right\}.
\label{eq:falseexact}
\end{align}
In particular, if $d\le\tau$, then $s(d,\tau)=d-1\le\tau-1$.

\section{Finite-Block Reliability under Posterior Mismatch}

The decoder uses a codebook-independent mapping $(r,H_{b-1})\mapsto Q_{b,n}(\cdot|r,H_{b-1})$, where $H_{b-1}$ denotes the side information available before decoding block $b$. Define the matched posterior self-information and positive log-posterior mismatch by
\begin{align}
J_{b,n} &=-\log p_{b,n}(E_b|R_b), \notag\\ 
\Delta_{b,n}^{+} &=
\left[ \log \frac{p_{b,n}(E_b|R_b)}{Q_{b,n}(E_b|R_b,H_{b-1})} \right]^+ .
\label{eq:mismatch}
\end{align}
Let $G_b$ denote the rank assigned to $E_b$ by $Q_{b,n}(\cdot|R_b,H_{b-1})$. By \eqref{eq:rankmass},
\begin{equation}
\log G_b \le J_{b,n}+\Delta_{b,n}^{+}.
\label{eq:rank-mismatch}
\end{equation}

For the random-subset ensemble, each block uses an independent codebook $\cC_{n,b}$ drawn uniformly from the $M_n$-subsets of $\A^n$, with $M_n=\lfloor a^{nR}\rfloor$. The GRANDAB query budget is $\tau_n=\lfloor a^{n\gamma}\rfloor$. For a fixed codebook $C$, $P_{e,b}^{(n)}(C)$ denotes the decoding-error probability averaged over the uniformly selected message, the channel output, and the side-information randomness. The probabilities and expectations defining $A_n$ and $D_n^+$ below are taken with respect to the corresponding ensemble-induced marginal distribution of $(E_b,R_b,H_{b-1})$.

\begin{theorem}[Finite-Block Reliability under Posterior Mismatch]
\label{thm:dynamic}
Assume $R+\gamma<1$. Suppose there exist $h<\gamma$ and $\delta \in(0,\gamma-h)$ such that
\begin{align}
A_n & \coloneqq \frac{1}{B_n} \sum_{b=1}^{B_n}
\Pp\!\left\{ J_{b,n}>n(h+\delta) \right\} \to 0,
\label{eq:An}\\
D_n^+ &\coloneqq \frac{1}{B_n n} \sum_{b=1}^{B_n}
\E \Delta_{b,n}^+ \to0.
\label{eq:Dn}
\end{align}
Then
\begin{align}
\frac{1}{B_n} \sum_{b=1}^{B_n} \E_{\cC_{n,b}}
P_{e,b}^{(n)}(\cC_{n,b})
&\le A_n + \frac{D_n^+}{\gamma-h-\delta} \notag\\
&\quad+
\frac{(M_n-1)(\tau_n-1)}{a^n-1}.
\label{eq:dynbound}
\end{align}
Consequently, the time-averaged ensemble error probability converges to zero.
\end{theorem}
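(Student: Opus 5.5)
Fix a block $b$ and work conditionally on the realized triple $(E_b,R_b,H_{b-1})$ and on the transmitted word. The query mapping is codebook-independent, so the rank $G_b$ of the true correction is a deterministic function of this triple. The conditional ensemble error is then exactly $P_{\rm e}^{\rm sub}(G_b;\tau_n)$ from \eqref{eq:exactcollision}. Averaging over the codebook, the message, the channel output and the side information, $\E_{\cC_{n,b}}P^{(n)}_{e,b}(\cC_{n,b})$ equals $\E\,P_{\rm e}^{\rm sub}(G_b;\tau_n)$, with the expectation taken under the ensemble-induced marginal of $(E_b,R_b,H_{b-1})$. This marginal is the one used to define $A_n$ and $D_n^+$.

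I split on the event $\{G_b>\tau_n\}$.
\begin{itemize}
\item On $\{G_b\le\tau_n\}$, the error is at most the false-acceptance term. By \eqref{eq:falseexact} with $s=G_b-1\le\tau_n-1$, it is bounded by $(M_n-1)(\tau_n-1)/(a^n-1)$; averaged over $b$, this gives the third term of \eqref{eq:dynbound}.
\item On $\{G_b>\tau_n\}$, the error is bounded by $1$, so it remains to bound $\Pp\{G_b>\tau_n\}$.
\end{itemize}

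To bound $\Pp\{G_b>\tau_n\}$, I use \eqref{eq:rank-mismatch}. The event $\{G_b>\tau_n\}$ implies $J_{b,n}+\Delta^+_{b,n}>\log\tau_n$. Since $\tau_n=\lfloor a^{n\gamma}\rfloor$, we have $\log\tau_n\ge n\gamma-\log 2$ for $n\gamma\ge0$ large enough. To absorb this slack cleanly, I use integrality: $G_b>\tau_n$ means $G_b\ge\tau_n+1>a^{n\gamma}$, so in fact $J_{b,n}+\Delta^+_{b,n}>n\gamma$ exactly, with no floor loss. Intersecting with the complement of $\{J_{b,n}>n(h+\delta)\}$ gives
\[
\Pp\{G_b>\tau_n\}\le\Pp\{J_{b,n}>n(h+\delta)\}+\Pp\{\Delta^+_{b,n}>n(\gamma-h-\delta)\}.
\]
Markov's inequality, applied to the nonnegative variable $\Delta^+_{b,n}$, bounds the second term by $\E\Delta^+_{b,n}/(n(\gamma-h-\delta))$; this requires $\gamma-h-\delta>0$, which is assumed. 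Averaging over $b=1,\dots,B_n$ yields $A_n+D_n^+/(\gamma-h-\delta)$. Summing the two pieces gives \eqref{eq:dynbound}.

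For the consequence, $A_n\to0$ and $D_n^+\to0$ by hypothesis. The collision term satisfies $(M_n-1)(\tau_n-1)/(a^n-1)\le a^{n(R+\gamma)}/(a^n-1)\to0$ because $R+\gamma<1$.

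The main subtlety is the conditioning step: justifying that the ensemble error equals $\E\,P_{\rm e}^{\rm sub}(G_b;\tau_n)$. This requires that $Q_{b,n}$, and hence $G_b$, depends only on $(R_b,H_{b-1})$ and not on $\cC_{n,b}$. It also requires that, given the transmitted word and the observation, the other codewords remain a uniform subset of the remaining words. The latter holds because the channel and the side information $H_{b-1}$ (built from earlier, independent codebooks) are independent of the current codebook beyond $X_b$. I would state this independence explicitly and then invoke \eqref{eq:exactcollision}--\eqref{eq:falseexact} conditionally.
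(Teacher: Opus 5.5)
Your proposal is correct and follows the paper's proof essentially step for step. It uses the integrality argument to get $\log G_b>n\gamma$, the union bound with Markov's inequality on $\Delta^+_{b,n}$ for abandonment, the conditional uniform-subset collision bound \eqref{eq:falseexact} on $\{G_b\le\tau_n\}$, and averaging over blocks. Your explicit identification of $\E_{\cC_{n,b}}P^{(n)}_{e,b}(\cC_{n,b})$ with $\E\,P^{\rm sub}_{\rm e}(G_b;\tau_n)$ under the ensemble-induced marginal simply makes precise the conditioning step that the paper invokes for false acceptance.
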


\begin{proof}
If $G_b>\tau_n$, then, since $G_b$ is integer and $\tau_n=\lfloor a^{n\gamma}\rfloor$, $G_b>a^{n\gamma}$, and hence $\log G_b>n\gamma$. Together with \eqref{eq:rank-mismatch}, this implies that at least one of $J_{b,n}>n(h+\delta)$ or $\Delta_{b,n}^+>n(\gamma-h-\delta)$ must occur. Therefore, by Markov's inequality,
\begin{equation}
\Pp\{G_b>\tau_n\} \le \Pp\{J_{b,n}>n(h+\delta)\}
+ \frac{\E\Delta_{b,n}^+}{n(\gamma-h-\delta)}.
\label{eq:abandon-bound}
\end{equation}

It remains to bound false acceptance on $\{G_b\le\tau_n\}$. Conditional on $H_{b-1}$, $X_b=x$, and $R_b=r$, the query order is fixed independently of $\cC_{n,b}$. Since $\cC_{n,b}$ is independent of the history, the $M_n-1$ nontransmitted codewords form a uniform $(M_n-1)$-subset of the other $a^n-1$ ambient words. Applying \eqref{eq:falseexact} and using $G_b\le\tau_n$ gives
\[
P_{\rm false} \le \frac{(M_n-1)(\tau_n-1)}{a^n-1}.
\]
Averaging \eqref{eq:abandon-bound} over $b$ and adding the false-acceptance bound yields \eqref{eq:dynbound}. Since $R+\gamma<1$, its final term converges to zero, completing the proof.
\end{proof}

\begin{corollary}[Reliability with a Reused Codebook]
\label{cor:samecode}
Suppose a single random-subset codebook $\cC_n$ is reused in all blocks and $H_{b-1}$ contains only pilot or channel-side information that is independent of $\cC_n$. Then the bound \eqref{eq:dynbound} holds with
\[
\frac{1}{B_n}
\sum_{b=1}^{B_n}
\E_{\cC_n}
P_{e,b}^{(n)}(\cC_n)
\]
on the left-hand side. Consequently, whenever the conditions of Theorem~\ref{thm:dynamic} hold, there exists a deterministic sequence of reused codebooks whose time-averaged error probability converges to zero.
\end{corollary}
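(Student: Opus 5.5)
We follow the proof of Theorem~\ref{thm:dynamic} block by block, with the single reused codebook $\cC_n$ in place of the independent codebooks $\cC_{n,b}$. The only step that used independence across blocks was the false-acceptance step, so we check that it survives when $H_{b-1}$ is independent of $\cC_n$.

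\textbf{Abandonment term.} The bound \eqref{eq:rank-mismatch} is pathwise and holds for any codebook, so \eqref{eq:abandon-bound} goes through unchanged under the ensemble-induced marginal law of $(E_b,R_b,H_{b-1})$. That law is the one defining $A_n$ and $D_n^+$. By symmetry of the random-subset ensemble, the transmitted word $X_b$ is uniform on $\A^n$ whether or not the codebook is reused. The channel, the pilots and $H_{b-1}$ do not depend on $\cC_n$. Hence this marginal law is the same as in the theorem.

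\textbf{False-acceptance term.} Condition on $H_{b-1}$, $X_b=x$ and $R_b=r$.
\begin{itemize}
\item By hypothesis $H_{b-1}$ contains only pilot or channel-side information independent of $\cC_n$, so it carries no information about the codebook.
\item The transmitted message in block $b$ is drawn uniformly and independently of earlier blocks' messages. If past decoded messages entered $H_{b-1}$, they would correlate with $\cC_n$; the hypothesis excludes this.
\item Therefore, conditionally on $X_b=x$ and this side information, the remaining $M_n-1$ codewords of $\cC_n$ still form a uniform $(M_n-1)$-subset of $\A^n\setminus\{x\}$.
\item The query order $Q_{b,n}(\cdot|r,H_{b-1})$ is fixed independently of $\cC_n$.
\end{itemize}
So \eqref{eq:falseexact} applies verbatim, giving $P_{\rm false}\le (M_n-1)(\tau_n-1)/(a^n-1)$ on $\{G_b\le\tau_n\}$. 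Averaging over $b$ yields \eqref{eq:dynbound} with $\E_{\cC_n}$ on the left-hand side.

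The main point needing care is that conditional uniformity of the other codewords given $H_{b-1}$. This is exactly where reuse could break the argument, for example through decision feedback, and the independence hypothesis is what rules that out.

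\textbf{Existence of a deterministic sequence.} For each $n$, the quantity $\frac1{B_n}\sum_b P^{(n)}_{e,b}(\cC_n)$ is a single function of the one random codebook $\cC_n$. Its ensemble mean is at most the right-hand side of \eqref{eq:dynbound}, which tends to zero under the conditions of Theorem~\ref{thm:dynamic}. Since a random variable cannot exceed its mean almost surely, some realization $C_n^\star$ satisfies
\[
\frac1{B_n}\sum_{b=1}^{B_n} P^{(n)}_{e,b}(C_n^\star)\le \frac1{B_n}\sum_{b=1}^{B_n}\E_{\cC_n}P^{(n)}_{e,b}(\cC_n).
\]
The sequence $(C_n^\star)$ is the required deterministic sequence of reused codebooks, and its time-averaged error probability converges to zero.

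This selection requires one codebook for all blocks, not a different codebook per block. That is why the time average is taken inside the expectation before choosing the realization.
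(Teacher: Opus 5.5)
Your proposal is correct and follows the route the paper intends; the corollary is stated without a separate proof, as a rerun of the proof of Theorem~\ref{thm:dynamic}. As you note, the pathwise rank bound and Markov step do not depend on the codebook, and the only use of codebook independence is the conditional uniformity of the $M_n-1$ nontransmitted codewords given $(H_{b-1},X_b,R_b)$, which the independence hypothesis on $H_{b-1}$ preserves; the deterministic sequence then follows by choosing a realization of the single reused codebook whose time-averaged error is at most its ensemble mean.
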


\section{Within-Block Switching}

Let $W_1,\ldots,W_K$, $K\ge2$, be a fixed finite collection of memoryless channels, each of which has uniform input as a capacity-achieving distribution. Let $p_k(e|r)$ denote the corresponding single-symbol correction posterior defined by \eqref{eq:posterior}. For $0\le S_n\le n-1$, define the admissible state-path class
\begin{equation}
\cS_n(S_n) = \left\{ s^n\in\{1,\ldots,K\}^n: \sum_{i=2}^{n}\mathbf{1}\{s_i\ne s_{i-1}\}\le S_n \right\}.
\label{eq:pathclass}
\end{equation}
For $s^n\in\cS_n(S_n)$, write $p_{s^n,n}(e^n|r^n) = \prod_{i=1}^{n}p_{s_i}(e_i|r_i)$. Define the equal-weight mixture of the admissible path-conditioned posteriors by
\begin{equation}
Q_n^{\rm sw}(e^n|r^n) = \frac{1}{|\cS_n(S_n)|} \sum_{s^n\in\cS_n(S_n)}
p_{s^n,n}(e^n|r^n).
\label{eq:swmix}
\end{equation}

For constituent channel $W_k$, define $Z_k=-\log p_k(E|R)$, $h_k=\E Z_k$, $\bar h=\max_{1\le k\le K}h_k$. For fixed $R$, the mapping between the channel input $X$ and correction $E$ is bijective. Hence, under uniform input, $h_k = H(E|R;W_k) = H(X|R;W_k)$. Because logarithms are base $a$ and uniform input achieves the capacity of
$W_k$,
\[
C(W_k)=1-h_k, \qquad 1-\bar h=\min_{1\le k\le K}C(W_k).
\]

\begin{theorem}[State-Path Posterior Mixture under Switching]
\label{thm:switch}
For every $s^n\in\cS_n(S_n)$ and every $(e^n,r^n)$,
\begin{equation}
\log \frac{p_{s^n,n}(e^n|r^n)}{Q_n^{\rm sw}(e^n|r^n)} \le r_n^{\rm sw},
\qquad
r_n^{\rm sw} \coloneqq \log|\cS_n(S_n)|,
\label{eq:swmismatch}
\end{equation}
where
\begin{equation}
|\cS_n(S_n)| = K\sum_{j=0}^{S_n} \binom{n-1}{j}(K-1)^j.
\label{eq:pathcount}
\end{equation}
For fixed $K\ge2$, $r_n^{\rm sw} \le \log K+\log(S_n+1) + S_n\log[n(K-1)] = \mathcal O(1+S_n\log n)$. If $S_n\log n=o(n)$, then, for every $\epsilon>0$,
\begin{equation}
\sup_{s^n\in\cS_n(S_n)} \Pp_{s^n}\!\left\{ \frac{1}{n} \log G_{Q^{\rm sw}}(E^n|R^n) >
\bar h+\epsilon \right\} \longrightarrow 0.
\label{eq:swrank}
\end{equation}
Moreover, for every $R<1-\bar h = \min_{1\le k\le K}C(W_k)$, there exists a deterministic code sequence whose average decoding-error probability converges to zero uniformly over $s^n\in\cS_n(S_n)$. Since the admissible family contains every constant state path, its average-error capacity under this deterministic unknown-state-path model equals $\min_{1\le k\le K}C(W_k)$.
\end{theorem}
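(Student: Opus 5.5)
The argument proceeds in four stages: the pointwise mismatch bound, the path count, the rank concentration, and then the coding and capacity consequences.

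\emph{Mismatch bound and path count.} The pointwise bound \eqref{eq:swmismatch} follows from dominance of a single mixture term. Every summand in \eqref{eq:swmix} is nonnegative, so for any $s^n\in\cS_n(S_n)$ we have $Q_n^{\rm sw}(e^n|r^n)\ge p_{s^n,n}(e^n|r^n)/|\cS_n(S_n)|$. Taking logarithms gives \eqref{eq:swmismatch}, and hence $\Delta^+\le r_n^{\rm sw}$ deterministically. For \eqref{eq:pathcount}, a path is determined by three choices:
\begin{itemize}
\item its initial state, in $K$ ways;
\item the set of $j$ switch positions among the $n-1$ transitions, in $\binom{n-1}{j}$ ways;
\item the new state at each switch, in $K-1$ ways each.
\end{itemize}
Summing over $j\le S_n$ gives \eqref{eq:pathcount}. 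Using $\binom{n-1}{j}(K-1)^j\le [n(K-1)]^{S_n}$ for each of the $S_n+1$ terms then gives the stated bound on $r_n^{\rm sw}$.

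\emph{Rank concentration \eqref{eq:swrank}.} Fix $s^n$. Applying \eqref{eq:rank-mismatch} with $Q=Q^{\rm sw}$ gives $\frac1n\log G\le \frac1n\sum_i Z_{s_i,i}+r_n^{\rm sw}/n$. Here $Z_{s_i,i}=-\log p_{s_i}(E_i|R_i)$ are independent across $i$ under the path-conditioned memoryless channel with uniform input. Each has mean $h_{s_i}\le\bar h$, and its law is one of only $K$ fixed distributions. Since $r_n^{\rm sw}/n\to0$, it suffices to show
\[
\sup_{s^n}\Pp\Bigl\{\tfrac1n\textstyle\sum_i (Z_{s_i,i}-h_{s_i})>\epsilon/2\Bigr\}\to0 .
\]
This is the main obstacle, because it needs a law of large numbers that is uniform over arbitrary deterministic orderings of $K$ distributions. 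I would first try Chebyshev, which works if each $Z_k$ has finite variance; then the bound $4\max_k\mathrm{Var}(Z_k)/(n\epsilon^2)$ is path-free. Finite variance holds because $p_k(e|r)\in[0,1]$ and $\E[(\log p_k)^2]$ is bounded by a constant depending only on $a$, since $\sum_e p(\log p)^2$ is bounded for a pmf on $a$ points. If finite variance were in doubt, the fallback is to group indices by state: there are at most $K$ groups, and in each the variables are i.i.d. Groups with $o(n)$ members contribute negligibly, and larger groups obey the weak law.

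\emph{Coding theorem and capacity.} The rank result at rate $\bar h+\epsilon$ plays the role of condition \eqref{eq:An} in Theorem~\ref{thm:dynamic}, with $J$ replaced by the full rank bound; $D_n^+$ is not needed separately. Given $R<1-\bar h$, pick $\gamma$ with $\bar h<\gamma<1-R$ and $\epsilon<\gamma-\bar h$. The proof of Theorem~\ref{thm:dynamic} then gives
\[
\E_{\cC_n}P_e(\cC_n;s^n)\le \sup_{s^n}\Pp\{\log G>n\gamma\}+(M_n-1)(\tau_n-1)/(a^n-1)=:\eta_n\to0
\]
uniformly over $s^n$. This holds because $Q^{\rm sw}$ is codebook-independent and path-independent.

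To obtain a deterministic code that works uniformly over the exponentially many paths, averaging over $s^n$ alone is not enough. I would apply Markov's inequality to $\E_{\cC}\sup_{s^n}P_e$; however, the bound above controls only $\sup_{s^n}\E_{\cC}$, which is weaker. A union bound over $|\cS_n|=a^{o(n)}$ paths fails too, since $\eta_n$ may decay slowly. The fix I would attempt is expurgation: make the rank tail exponentially small via Chernoff bounds on the bounded-exponential-moment variables $Z_k$ (note $\E a^{tZ_k}$ is finite for small $t$), and likewise use the exponentially small collision term. Then $\eta_n\le a^{-cn}$ and $|\cS_n|\eta_n\to0$, which yields a single deterministic codebook with $\sup_{s^n}P_e\to0$.

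Finally, for the converse, the constant paths $s^n\equiv k$ lie in $\cS_n(S_n)$. Any code that works uniformly must therefore work on each $W_k$, so $R\le C(W_k)$ for every $k$. Combined with achievability, the capacity equals $\min_k C(W_k)$.
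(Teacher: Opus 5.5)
Your proposal is correct and follows the paper's proof essentially step for step. The steps match: single-term dominance of the mixture gives \eqref{eq:swmismatch}; the initial-state, switch-location and new-state count gives \eqref{eq:pathcount}; the bounded exponential moment $\E a^{tZ_k}\le a^t$ gives a Chernoff bound uniform over paths; summing exponentially small per-path ensemble errors over the $a^{o(n)}$ admissible paths yields one deterministic code; and constant paths give the converse. The differences are minor: you first prove \eqref{eq:swrank} by Chebyshev, which is valid since $\sum_e p(\log p)^2$ is bounded for a pmf on $a$ points, and you correctly note that the code-construction step still needs the exponential Chernoff rate. Also, what you call ``expurgation'' is really the same union bound $\E_{\cC}\sup_{s^n}P_e\le\sum_{s^n}\E_{\cC}P_e$ that the paper uses.
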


\begin{proof}
Each summand in \eqref{eq:swmix} has weight
$|\cS_n(S_n)|^{-1}$, so
\[
Q_n^{\rm sw}(e^n|r^n) \ge \frac{p_{s^n,n}(e^n|r^n)}{|\cS_n(S_n)|}
\]
for every admissible path, which gives \eqref{eq:swmismatch}. Choosing $j$ switch locations among the $n-1$ possible locations and one of $K-1$ new states at each switch gives \eqref{eq:pathcount}. For fixed $K$, this implies $r_n^{\rm sw}=\mathcal{O}(1+S_n\log n)$. Conditional on $s^n$, the matched posterior self-information is
\[
-\log p_{s^n,n}(E^n|R^n) = \sum_{i=1}^{n} -\log p_{s_i}(E_i|R_i),
\]
a sum of independent random variables with mean at most $n\bar h$. For each $k$ and every $t\in(0,1)$,
\begin{equation}
\E a^{tZ_k} = \E_R\sum_e p_k(e|R)^{1-t} \le a^t.    
\end{equation}
Hence the centered cumulant-generating functions are finite in a neighborhood of zero; since $K$ is fixed, a Chernoff bound applies uniformly over all admissible paths. Together with \eqref{eq:rankmass} and \eqref{eq:swmismatch}, this proves \eqref{eq:swrank}.

For $R<1-\bar h$, choose $\epsilon>0$ such that $R+\bar h+\epsilon<1$. The random-subset collision bound then gives an exponentially vanishing ensemble error probability for each admissible path. Because $|\cS_n(S_n)|=a^{o(n)}$, summing these ensemble errors over all admissible paths still tends to zero. Therefore a deterministic code sequence exists whose maximum error over $\cS_n(S_n)$ tends to zero. Conversely, every constant state path belongs to $\cS_n(S_n)$, so any uniformly reliable code sequence must operate below $C(W_k)$ for every $k$. This proves the capacity statement.
\end{proof}

For a fixed candidate correction $e^n$, its switching-mixture score can be evaluated without enumerating all state paths. Let $F_i(k,\ell)$ denote the sum of the path-posterior products over state paths of length $i$ that end in state $k$ and contain exactly $\ell$ switches. Initialize $F_1(k,0)=p_k(e_1|r_1)$, $F_1(k,\ell)=0$,$ \ell>0$. For $i\ge2$,
\begin{equation}
F_i(k,\ell) = p_k(e_i|r_i) \left[ F_{i-1}(k,\ell) + \sum_{j\ne k}F_{i-1}(j,\ell-1) \right],
\label{eq:swDP}
\end{equation}
with $F_i(k,\ell)=0$ for $\ell<0$. Therefore
\begin{equation}
Q_n^{\rm sw}(e^n|r^n) = \frac{1}{|\cS_n(S_n)|} \sum_{\ell=0}^{S_n} \sum_{k=1}^{K} F_n(k,\ell).
\label{eq:swDPscore}
\end{equation}
A direct implementation of \eqref{eq:swDP} requires $\mathcal{O}\!\left(nK^2(S_n+1)\right)$ operations per candidate correction; maintaining the sum over the previous states reduces this to $\mathcal{O}\!\left(nK(S_n+1)\right)$. The recursion can be evaluated in the log domain for numerical stability. The recursion evaluates the score of a specified candidate; generating the complete correction list in nonincreasing $Q_n^{\rm sw}$ order is a separate algorithmic problem.

\section{Pilot Tracking under Block-to-Block Drift}

Fix an arbitrary deterministic parameter path $\theta_1,\ldots,\theta_B\in\Theta$, where $\Theta\subset\mathbb R^d$ is compact. Let
\[
\mathcal T_q=\{1,1+q,1+2q,\ldots\}\cap\{1,\ldots,B\}
\]
be the set of refresh blocks. At each $t\in\mathcal T_q$, $m$ pilot observations produce an estimator $\widetilde\theta_t\in\Theta$ satisfying
\begin{equation}
\sup_{\theta\in\Theta} \E_\theta\|\widetilde\theta-\theta\|
\le \varepsilon_m .
\label{eq:est}
\end{equation}
For block $b$, let $t(b)=\max\{t\in\mathcal T_q:t\le b\}$ and set $\widehat\theta_b=\widetilde\theta_{t(b)}$. Define the total parameter variation
\begin{equation}
V_B = \sum_{b=2}^{B} \|\theta_b-\theta_{b-1}\|.
\label{eq:VB}
\end{equation}

Assume that the correction posteriors have common support and that, for some $L_{\rm post}<\infty$,
\begin{equation}
\frac{1}{n} \E_\theta \left[ \log \frac{p_{\theta,n}(E^n|R^n)}{p_{\theta',n}(E^n|R^n)} \right]^+
\le L_{\rm post}\|\theta-\theta'\|
\label{eq:sens}
\end{equation}
for all $\theta,\theta'\in\Theta$. The decoder in block $b$ uses $Q_{b,n}= p_{\widehat\theta_b,n}$.

\begin{theorem}[Tracking Bound for Periodic Pilot Refresh]
\label{thm:pilot}
Suppose that, conditional on the parameter path, the pilot observations used to form $\widetilde\theta_t$ are independent of the data observations in the blocks to which that estimate is applied. Then
\begin{align}
\frac{1}{B} \sum_{b=1}^{B} \E\|\theta_b-\widehat\theta_b\|
&\le \varepsilon_m+\frac{qV_B}{B},
\label{eq:track}\\
D_{B,n}^+ \coloneqq \frac{1}{Bn} \sum_{b=1}^{B} \E\Delta_{b,n}^+
&\le L_{\rm post} \left( \varepsilon_m+\frac{qV_B}{B} \right).
\label{eq:pilotreg}
\end{align}
\end{theorem}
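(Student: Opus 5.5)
We prove \eqref{eq:track} by the triangle inequality, splitting the tracking error into an estimation part and a drift part, and then obtain \eqref{eq:pilotreg} by conditioning on the pilot estimate and applying \eqref{eq:sens}.

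\textbf{Decomposition.} For each block $b$ with $t=t(b)$,
\[
\|\theta_b-\widehat\theta_b\|\le\|\theta_t-\widetilde\theta_t\|+\|\theta_b-\theta_t\|.
\]
The pilots used at refresh time $t$ are generated under $\theta_t$, so \eqref{eq:est} gives $\E\|\theta_t-\widetilde\theta_t\|\le\varepsilon_m$. Averaging over $b$, the first term therefore contributes at most $\varepsilon_m$.

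\textbf{Drift term.} By the triangle inequality along the path,
\[
\|\theta_b-\theta_t\|\le\sum_{j=t+1}^{b}\|\theta_j-\theta_{j-1}\|.
\]
Sum this over all blocks $b$ in one refresh window $\{t,\ldots,t+q-1\}$. Each increment $\|\theta_j-\theta_{j-1}\|$ with $j$ in that window appears at most $q-1\le q$ times, namely once for each $b\ge j$ in the window. Increments $j=t$ at window starts never appear, because $\theta_t$ itself is the reference. The windows are disjoint, so summing over all windows gives
\[
\sum_b\|\theta_b-\theta_{t(b)}\|\le qV_B.
\]
Dividing by $B$ yields \eqref{eq:track}. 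This counting step is the only place where care is needed. The sharper factor $q-1$ is available, but $q$ suffices.

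\textbf{Mismatch bound.} Condition on $\widehat\theta_b$. By the independence hypothesis, conditional on the parameter path, $\widehat\theta_b$ is independent of $(E_b,R_b)$ in block $b$. Hence $(E_b,R_b)$ retains its law under $\theta_b$ after conditioning. The matched posterior is $p_{b,n}=p_{\theta_b,n}$ and $Q_{b,n}=p_{\widehat\theta_b,n}$, so $\Delta_{b,n}^+$ is exactly the positive log-ratio appearing in \eqref{eq:sens} with $\theta=\theta_b$ and $\theta'=\widehat\theta_b$. Applying \eqref{eq:sens} for the frozen value $\theta'=\widehat\theta_b\in\Theta$ gives
\[
\tfrac1n\E[\Delta_{b,n}^+\mid\widehat\theta_b]\le L_{\rm post}\|\theta_b-\widehat\theta_b\|.
\]
Taking expectations, averaging over $b$, and invoking \eqref{eq:track} gives \eqref{eq:pilotreg}.

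The main point requiring justification is this conditioning step. Without the independence assumption, the estimate would be correlated with the data and \eqref{eq:sens}, which is stated for fixed $\theta'$, could not be applied pointwise. Measurability of $\widehat\theta_b$ is routine.
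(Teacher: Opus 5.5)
Your proposal is correct and follows the paper's proof. It uses the same triangle-inequality split into the refresh-block estimation error plus accumulated increments, the same counting of each increment over at most $q$ blocks, and the same conditioning on $\widehat\theta_b$ (via pilot--data independence) to apply the sensitivity bound; your observation that the count is really at most $q-1$ is a harmless sharpening the paper does not state.
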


\begin{proof}
For $b$ in the refresh interval beginning at $t=t(b)$,
\begin{equation}
\|\theta_b-\widehat\theta_b\|
\le \|\theta_t-\widetilde\theta_t\| + \sum_{\ell=t+1}^{b} \|\theta_\ell-\theta_{\ell-1}\|.
\label{eq:track-pointwise}
\end{equation}
Averaging the first term over the blocks and applying \eqref{eq:est} gives at most $\varepsilon_m$. Each increment $\|\theta_\ell-\theta_{\ell-1}\|$ contributes to at most $q$ blocks, so the second term contributes at most $qV_B/B$. This proves \eqref{eq:track}. Conditional independence of the pilot estimator and the data observations allows \eqref{eq:sens} to be applied conditional on $\widehat\theta_b$; averaging then gives \eqref{eq:pilotreg}.
\end{proof}

\begin{corollary}[Refresh-Period Optimization at Fixed Pilot-to-Data Ratio]
\label{cor:qstar}
Suppose $\varepsilon_m\le C_{\rm est}m^{-1/2}$, $v=\frac{V_B}{B}$, $\phi=\frac{m}{nq}$. Under the continuous relaxation $q>0$ and $m=\phi nq$, the right-hand side of \eqref{eq:track} is bounded by
\begin{equation}
f(q) = \frac{C_{\rm est}}{\sqrt{\phi nq}} + qv.
\label{eq:fq}
\end{equation}
For $v>0$, $f$ has the unique minimizer
\begin{equation}
q^\star = \left( \frac{C_{\rm est}}{2v\sqrt{\phi n}} \right)^{2/3},
\qquad f(q^\star) = \frac{3C_{\rm est}^{2/3}v^{1/3}}{2^{2/3}(\phi n)^{1/3}}.
\label{eq:qstar}
\end{equation}
For integer-constrained $q$, minimize $f$ over the admissible integers. Since $f$ decreases on $(0,q^\star)$ and increases on $(q^\star,\infty)$ for $v>0$, it suffices to compare the nearest
admissible value on each side of $q^\star$. If $v=0$, $f(q)$ decreases with $q$, so the largest admissible refresh interval minimizes the bound. Over a complete $q$-block refresh interval, the fraction of pilot symbols among pilot and data channel uses is $\phi/(1+\phi)$.
\end{corollary}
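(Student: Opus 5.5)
The proof reduces to substituting the pilot-accuracy assumption into \eqref{eq:track} and then minimizing a one-variable function by elementary calculus. I would start from Theorem~\ref{thm:pilot}, whose right-hand side in \eqref{eq:track} is $\varepsilon_m+qV_B/B$. With $\varepsilon_m\le C_{\rm est}m^{-1/2}$, $v=V_B/B$, and $m=\phi nq$, this becomes
\[
\varepsilon_m+\frac{qV_B}{B}\le \frac{C_{\rm est}}{\sqrt{\phi nq}}+qv=f(q).
\]
This establishes \eqref{eq:fq}. From here on, $q$ is treated as a positive real variable.

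For $v>0$, I would differentiate:
\[
f'(q)=-\tfrac12 C_{\rm est}(\phi n)^{-1/2}q^{-3/2}+v.
\]
The derivative is strictly increasing in $q$, tends to $-\infty$ as $q\to0^+$, and tends to $v>0$ as $q\to\infty$. It therefore has exactly one zero, where $q^{3/2}=C_{\rm est}/(2v\sqrt{\phi n})$, which gives $q^\star$ in \eqref{eq:qstar}. The sign pattern of $f'$ shows that $f$ decreases on $(0,q^\star)$ and increases on $(q^\star,\infty)$, so $q^\star$ is the unique global minimizer. Strict convexity, via $f''>0$, gives the same conclusion.

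To evaluate $f(q^\star)$, I would use the stationarity relation $C_{\rm est}(\phi n)^{-1/2}(q^\star)^{-1/2}=2vq^\star$. This gives $f(q^\star)=3vq^\star$. Substituting $q^\star$ and simplifying the powers yields
\[
3v\,C_{\rm est}^{2/3}2^{-2/3}v^{-2/3}(\phi n)^{-1/3},
\]
which matches \eqref{eq:qstar}. Tracking the exponents here is the only place that needs care; the rest is routine.

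The remaining claims follow quickly.
\begin{itemize}
\item \emph{Integer-constrained $q$:} because $f$ is unimodal, the integer minimizer lies at the nearest admissible integer on one side of $q^\star$ or the other, so comparing those two values suffices.
\item \emph{The case $v=0$:} here $f(q)=C_{\rm est}(\phi nq)^{-1/2}$ is strictly decreasing, so the largest admissible $q$ minimizes the bound.
\item \emph{Pilot fraction:} over one full refresh interval there are $m=\phi nq$ pilot uses and $nq$ data uses. The pilot fraction is therefore $\phi nq/(\phi nq+nq)=\phi/(1+\phi)$.
\end{itemize}
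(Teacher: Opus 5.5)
Your proposal is correct and follows the same route as the paper: substitute $\varepsilon_m\le C_{\rm est}(\phi nq)^{-1/2}$ and $v=V_B/B$ into \eqref{eq:track}, find the unique stationary point of $f$ by elementary calculus, and read off the unimodality, $v=0$, and pilot-fraction claims directly. The paper gives no separate proof beyond the reasoning written into the corollary statement, and your derivation fills it in correctly. This includes the shortcut $f(q^\star)=3vq^\star$; the only tacit assumption is $C_{\rm est}>0$, which your strict-monotonicity claims need, as do the paper's.
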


\begin{figure*}[t]
\centering
\includegraphics[width=0.91\textwidth]{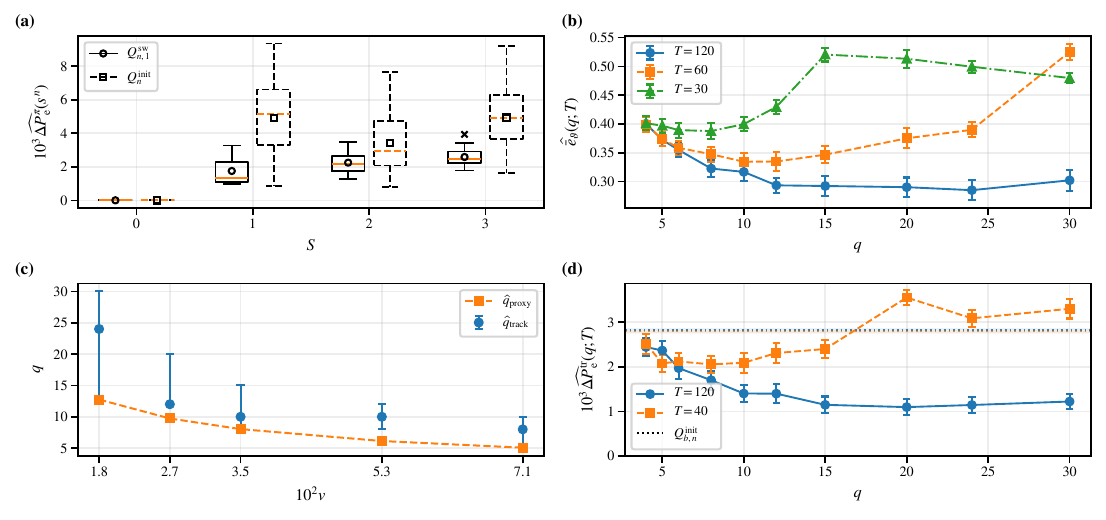}
\vspace{-12pt}
\caption{Finite-block random-subset ensemble evaluation. (a) Pathwise switching excess error. (b) Average pilot-tracking error. (c) Empirical refresh-grid argmin and calibration-based plug-in proxy. (d) Paired pilot-tracked decoding excess error with static reference baselines.}
\vspace{-5pt}
\label{fig:dynamic}
\end{figure*}

\subsection{BPSK with Generalized-Gaussian Noise}

Let $\vartheta=(\beta,\eta)$, where $\eta=\ln\sigma$, and consider independent generalized-Gaussian noise samples within each block, each with density
\begin{equation}
f_{\beta,\sigma}(z) = \frac{\beta}{2\sigma\Gamma(1/\beta)}
\exp\!\left[ -\left(\frac{|z|}{\sigma}\right)^\beta \right],
\quad 1<\beta<2, \; \sigma>0.
\label{eq:ggn}
\end{equation}
Known BPSK pilots yield residuals $Y-X$ with density \eqref{eq:ggn}. We measure parameter variation in the coordinates $\vartheta=(\beta,\ln \sigma)$ using the Euclidean norm in \eqref{eq:VB}.

For the sign reference decision and $r=|y|$, the binary correction posterior has natural-log odds
\begin{equation}
\lambda_{\beta,\sigma}(r) \coloneqq
\ln \frac{p_{\beta,\sigma}(E=0|r)}{p_{\beta,\sigma}(E=1|r)}
= \sigma^{-\beta} \left[ (r+1)^\beta-|r-1|^\beta \right].
\label{eq:ggnllr}
\end{equation}

\begin{lemma}[Generalized-Gaussian Posterior Regularity]
\label{lem:ggnreg}
Let $\Theta_0=[\beta_{\min},\beta_{\max}] \times[\eta_{\min},\eta_{\max}] \subset(1,2) \times\mathbb R$, where $1<\beta_{\min}<\beta_{\max}<2$ and $-\infty<\eta_{\min}<\eta_{\max}<\infty$.  Then the family of BPSK correction posteriors indexed by $\Theta_0$ has common support, and there exists $L_{\rm post}<\infty$ such that \eqref{eq:sens} holds for all $\vartheta,\vartheta'\in\Theta_0$.
\end{lemma}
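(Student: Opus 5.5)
The plan is to reduce \eqref{eq:sens} to a single-symbol statement and then bound the single-symbol log-ratio by a uniformly integrable envelope times $\|\vartheta-\vartheta'\|$.

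\textbf{Reduction to one symbol.} The noise is i.i.d.\ within a block and BPSK with the sign reference decision gives a product posterior. Hence $p_{\vartheta,n}(e^n|r^n)=\prod_i p_\vartheta(e_i|r_i)$, where $r_i=|y_i|$.

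\textbf{Common support.} Since $\lambda_{\vartheta}(r)$ in \eqref{eq:ggnllr} is finite for every $r\ge0$ and $\vartheta\in\Theta_0$, both $p_\vartheta(0|r)=1/(1+e^{-\lambda})$ and $p_\vartheta(1|r)$ lie in $(0,1)$. So all posteriors have full support $\{0,1\}^n$, which is common support.

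\textbf{Tensorization.} Using $[\sum_i x_i]^+\le\sum_i[x_i]^+$, it suffices to show a per-symbol bound
\[
\E_\vartheta\Bigl[\ln\frac{p_\vartheta(E|R)}{p_{\vartheta'}(E|R)}\Bigr]^+\le L\|\vartheta-\vartheta'\|,
\]
after which we convert to base-$2$ logarithms by a constant factor. This avoids any issue with the positive part of a sum.

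\textbf{Per-symbol Lipschitz bound.} Write $\ln p_\vartheta(e|r)=-\ln(1+e^{\mp\lambda_\vartheta(r)})$, with the sign set by $e$. The map $\lambda\mapsto\ln(1+e^{-\lambda})$ is $1$-Lipschitz, since its derivative is a logistic function bounded by $1$. Therefore
\[
|\ln p_\vartheta(e|r)-\ln p_{\vartheta'}(e|r)|\le|\lambda_\vartheta(r)-\lambda_{\vartheta'}(r)|.
\]
Because $\Theta_0$ is a convex rectangle, the mean value theorem gives
\[
|\lambda_\vartheta(r)-\lambda_{\vartheta'}(r)|\le \sup_{\Theta_0}\|\nabla_\vartheta\lambda(r)\|\,\|\vartheta-\vartheta'\|.
\]
The gradient components are:
\[
\partial_\eta\lambda=-\beta\lambda,\qquad
\partial_\beta\lambda=-\eta\lambda+\sigma^{-\beta}\bigl[(r+1)^\beta\ln(r+1)-|r-1|^\beta\ln|r-1|\bigr].
\]

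\textbf{Main obstacle: an integrable envelope.} We need $g(r)$ with $\sup_{\Theta_0}\|\nabla\lambda(r)\|\le g(r)$ and $\sup_{\vartheta}\E_\vartheta g(R)<\infty$.

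\emph{Behavior of $\lambda$.} By the mean value theorem, $(r+1)^\beta-|r-1|^\beta\le 2\beta\max(r+1,1)^{\beta-1}$. So $|\lambda|\le c(1+r)$ uniformly on $\Theta_0$, using $\beta<2$ and the boundedness of $\sigma^{-\beta}$ and $|\eta|$ there.

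\emph{The $\beta$-derivative term.} The function $x\mapsto x^\beta\ln x$ has derivative $x^{\beta-1}(\beta\ln x+1)$. This has an integrable singularity at $x=0$ for $\beta>1$, since $x^{\beta-1}\ln x\to0$. For $x\ge0$ the difference $(r+1)^\beta\ln(r+1)-|r-1|^\beta\ln|r-1|$ is therefore at most $c(1+r)\ln(2+r)$. One case needs care: when $r<1$ the points $r+1$ and $1-r$ lie on the same side of the origin, so we can bound the difference directly by the mean value theorem on $[1-r,1+r]$. The derivative is bounded there because $x^{\beta-1}|\ln x|$ is bounded on $[0,2]$ uniformly in $\beta\in[\beta_{\min},\beta_{\max}]$. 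The same argument handles $r\ge1$ on $[r-1,r+1]$.

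\emph{Integrability.} This yields $g(r)\le c'(1+r)\ln(2+r)$. Under $\E_\vartheta$, $R=|X+Z|\le1+|Z|$. Generalized-Gaussian noise with parameters in the compact $\Theta_0$ has uniformly bounded moments of every order, since $\E|Z|^k=\sigma^k\Gamma((k+1)/\beta)/\Gamma(1/\beta)$ is continuous on $\Theta_0$. Hence $\sup_\vartheta\E_\vartheta g(R)<\infty$.

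\textbf{Conclusion.} Taking $L_{\rm post}=\sup_\vartheta\E_\vartheta g(R)$ (divided by $\ln 2$ for base-$2$ logarithms) gives \eqref{eq:sens} on $\Theta_0$. Note that the positive part is never needed, since we bound the absolute value.
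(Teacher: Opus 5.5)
Your proof is correct and follows essentially the same route as the paper: strict positivity gives common support, then a mean-value bound on the single-symbol log posterior with a uniform envelope of polynomial and log-polynomial growth, uniform generalized-Gaussian moments over the compact set $\Theta_0$, and tensorization via $[\sum_i x_i]^+\le\sum_i|x_i|$. Your version makes the paper's sketch explicit by working through the log-odds $\lambda$, using the $1$-Lipschitz property of $\lambda\mapsto\ln(1+e^{\mp\lambda})$, and computing $\partial_\beta\lambda$ and $\partial_\eta\lambda$ directly.
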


\begin{proof}
The density \eqref{eq:ggn} is strictly positive, so both binary correction probabilities are positive for every finite observation. The derivatives of the log posterior with respect to $(\beta,\eta)$ are bounded, uniformly on $\Theta_0$, by an integrable function of the observation of polynomial and log-polynomial growth. Generalized-Gaussian tails on a compact parameter set give finite uniform moments of this envelope. A mean-value bound on the single-symbol log posterior, followed by $[\sum_i x_i]^+\le\sum_i|x_i|$, yields \eqref{eq:sens}.
\end{proof}

For the pilot estimator, assume the uniform parametric risk bound
\begin{equation}
\sup_{\vartheta\in\Theta_0} \E_\vartheta
\|\widehat\vartheta_m-\vartheta\|
\le C_{\rm est}m^{-1/2}
\label{eq:rootm}
\end{equation}
for all sufficiently large $m$. The numerical study below uses a profile-likelihood estimator constrained to $\Theta_{\rm num}=[1.02,1.98]\times[\ln0.65,\ln1.45]$: for fixed $\beta$, $\eta$ is profiled analytically, and the remaining bounded scalar problem is localized on a deterministic grid and refined numerically.
The reliability result requires only the risk bound \eqref{eq:rootm}.

\vspace{3pt}

Define $Z_\vartheta=-\log p_\vartheta(E|R)$, $h(\vartheta)=\E_\vartheta Z_\vartheta$, and
$\bar h=\sup_{\vartheta\in\Theta_0}h(\vartheta)$. For every $t\in(0,1)$, $\E_\vartheta 2^{tZ_\vartheta} = \E_R\sum_{e\in\{0,1\}} p_\vartheta(e|R)^{1-t} \le 2^t$ uniformly over $\vartheta\in\Theta_0$. Hence $\sup_{\vartheta\in\Theta_0}\E_\vartheta Z_\vartheta^2<\infty$. Since the symbols are independent within each block and $h(\vartheta)\le\bar h$, Chebyshev's inequality gives, for every $\epsilon>0$, $\sup_{\vartheta\in\Theta_0} \Pp_\vartheta\!\left\{ \frac{1}{n}\sum_{i=1}^{n}Z_{\vartheta,i}> \bar h+\epsilon \right\} \longrightarrow 0$, which is the uniform upper-tail concentration required in Theorem~\ref{thm:dynamic}.

\begin{corollary}[Reliability under Generalized-Gaussian Drift]
\label{cor:ggrel}
Let the quantities $B_n$, $m_n$, $q_n$, and $V_{B_n}$ vary with blocklength $n$, and suppose the parameter path remains in $\Theta_0$. If $R<1-\bar h$, choose $\gamma$ such that $\bar h<\gamma<1-R$. With $Q_{b,n}=p_{\widehat\vartheta_b,n}$, the time-averaged error probability in Theorem~\ref{thm:dynamic} converges to zero provided $m_n\to\infty$ and $\frac{q_nV_{B_n}}{B_n}\to0$. The rate $R$ is measured per data symbol. If pilot and data symbols consume the same channel resource, asymptotically vanishing pilot overhead further requires $\frac{m_n}{nq_n}\to0$.
\end{corollary}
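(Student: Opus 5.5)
The plan is to verify the two hypotheses \eqref{eq:An} and \eqref{eq:Dn} of Theorem~\ref{thm:dynamic} for the GGN model with $Q_{b,n}=p_{\widehat\vartheta_b,n}$, and then let the theorem's bound \eqref{eq:dynbound} finish the argument. Since $R<1-\bar h$, the interval $(\bar h,1-R)$ is nonempty. Fix $\gamma$ in it, so that $R+\gamma<1$. Put $h=\bar h$ and pick any $\delta\in(0,\gamma-\bar h)$. With these choices the final term $(M_n-1)(\tau_n-1)/(a^n-1)\le a^{n(R+\gamma-1)}\to0$, exactly as in the theorem.

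For \eqref{eq:An}, I use the fact that the matched self-information in block $b$ is $J_{b,n}=\sum_{i=1}^n Z_{\vartheta_b,i}$. This is a sum of $n$ independent single-symbol terms whose parameter $\vartheta_b$ lies in $\Theta_0$. One point needs care. The probabilities in \eqref{eq:An} are taken under the ensemble-induced law of $(E_b,R_b)$, and with a random-subset codebook and a uniformly chosen message the transmitted word is marginally uniform on $\A^n$. Consequently $(E_b,R_b)$ has exactly the auxiliary-prior law used to define $p_\vartheta$ and $Z_\vartheta$. The uniform Chebyshev statement derived just before the corollary, applied with $\epsilon=\delta$, then gives $\sup_b\Pp\{J_{b,n}>n(\bar h+\delta)\}\to0$. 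The time average $A_n$ therefore tends to zero no matter how $B_n$ grows.

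For \eqref{eq:Dn}, I invoke Theorem~\ref{thm:pilot}. Lemma~\ref{lem:ggnreg} supplies common support and the sensitivity bound \eqref{eq:sens} on $\Theta_0$. One subtlety is that the estimator is constrained to a set ($\Theta_{\rm num}$ in the experiments) that may not lie inside $\Theta_0$. I would handle this either by projecting $\widehat\vartheta$ onto $\Theta_0$, which is convex, so that by nonexpansiveness the risk bound is preserved, or by assuming the estimator takes values in $\Theta_0$. Then \eqref{eq:rootm} gives $\varepsilon_{m_n}\le C_{\rm est}m_n^{-1/2}$ for large $m_n$, and \eqref{eq:pilotreg} yields $D_n^+\le L_{\rm post}\bigl(C_{\rm est}m_n^{-1/2}+q_nV_{B_n}/B_n\bigr)$. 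This tends to zero under $m_n\to\infty$ and $q_nV_{B_n}/B_n\to0$. The pilot/data conditional-independence hypothesis of Theorem~\ref{thm:pilot} holds because pilots are separate observations with independent noise. I also need the side-information condition in Theorem~\ref{thm:dynamic}: $H_{b-1}$, consisting of the pilots, must be independent of the block-$b$ codebook. This holds because each block draws a fresh codebook, and the same reasoning covers Corollary~\ref{cor:samecode} for a reused codebook.

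With both hypotheses verified, \eqref{eq:dynbound} gives convergence of the time-averaged error. The remaining rate remarks are bookkeeping. $R$ counts data symbols only. Over a refresh interval there are $m_n$ pilot symbols and $nq_n$ data symbols, so the effective rate is $R/(1+m_n/(nq_n))$, and the pilot overhead vanishes exactly when $m_n/(nq_n)\to0$. I expect the main obstacle to be the measure-theoretic matching in the first step: checking that the uniform concentration, stated under the auxiliary uniform prior, transfers to the ensemble-induced marginal, and that the expectation in \eqref{eq:sens} is taken under the same marginal law. Both reduce to the observation that a random-subset codeword is marginally uniform.
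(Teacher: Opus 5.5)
Your proposal is correct and follows the route the paper sets up for this corollary. The uniform Chebyshev concentration stated just before it gives $A_n\to0$ with $h=\bar h$, while Lemma~\ref{lem:ggnreg}, \eqref{eq:rootm}, and Theorem~\ref{thm:pilot} give $D_n^+\le L_{\rm post}\bigl(C_{\rm est}m_n^{-1/2}+q_nV_{B_n}/B_n\bigr)\to0$, so Theorem~\ref{thm:dynamic} applies; your extra checks (marginal uniformity of the random-subset codeword, and projecting the estimator onto the convex set $\Theta_0$ to meet Theorem~\ref{thm:pilot}'s range requirement without losing the risk bound) are valid refinements of points the paper leaves implicit.
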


\section{Finite-Block Numerical Evaluation}

For the decoding results in Fig.~\ref{fig:dynamic}(a),(d), with matched posterior $p$, query law $Q$, and output $r^n$, we compute
\[
\mathcal E_n(Q,p;r^n)=\sum_{e^n}p(e^n|r^n)P_{\rm e}^{\rm sub}\!\left(G_Q(e^n|r^n);\tau_n\right).
\]
Thus, for each sampled output, all $2^n$ corrections are enumerated and the exact conditional kernel~\eqref{eq:exactcollision} is averaged under the matched posterior. The correction and random-code ensemble averages are exact; only the outer expectations over channel outputs and pilot estimates are Monte Carlo approximations. Compared policies share the same data vector within each repetition. Throughout Fig.~\ref{fig:dynamic}, hats denote these Monte Carlo estimates. Define $\Delta P_{\mathrm e}^{\pi}=P_{\mathrm e}^{\pi}-P_{\mathrm e}^{\mathrm{oracle}}$ and $\bar e_{\vartheta}(q;T)=B^{-1}\sum_b\E\|\vartheta_b-\widehat\vartheta_b\|$.

Figure~\ref{fig:dynamic}(a) uses two BPSK-GGN states, $W_1$ with $(\beta,\sigma)=(1.05,0.75)$ and $W_2$ with $(\beta,\sigma)=(1.95,1.30)$, for which $\bar h=0.46709$ bits/symbol. We set $n=10$ and the nominal ensemble parameters $(R,\gamma)=(0.25,0.65)$, giving $M_n=5$ and $\tau_n=90$, so $\bar h<\gamma<1-R$. The finite-block floor gives the realized code rate $\log_2(5)/10=0.23219$ bits per data symbol. Because all policies know $s_1=1$, let $\cS_{n,1}(S)=\{u^n\in\cS_n(S):u_1=1\}$ and
\[
Q_{n,1}^{\rm sw}(e^n|r^n)=|\cS_{n,1}(S)|^{-1}\!\sum_{u^n\in\cS_{n,1}(S)}p_{u^n,n}(e^n|r^n).
\]
The oracle is $Q_n^{\rm oracle}=p_{s^n,n}$ and the static baseline is $Q_n^{\rm init}(e^n|r^n)=\prod_i p_1(e_i|r_i)$. For each exact $S\in\{0,1,2,3\}$, the decoder is supplied the same switch upper bound $S$; all $1,9,36,84$ fixed-initial paths are evaluated, while $|\cS_{n,1}(S)|=1,10,46,130$. Each path uses 64 independent length-$n$ output vectors and all $2^{10}$ corrections. Panel~(a) gives Tukey boxplots across the resulting pathwise excess-error estimates; open circles/squares are means across paths, and crosses are Tukey fliers. At $S=3$, the path-average excess estimates are $2.586\times10^{-3}$ for $Q_{n,1}^{\rm sw}$ and $4.938\times10^{-3}$ for $Q_n^{\rm init}$; the former is smaller on 81 of 84 paths. Every sample/correction satisfies $\log_2[p_{s^n,n}/Q_{n,1}^{\rm sw}]\le\log_2|\cS_{n,1}(S)|$ within numerical tolerance.

For drift, $B=120$, $\beta_b=1.5+0.42\sin(2\pi b/T-\pi/2)$, and $\eta_b=0.28\sin(2\pi b/T-\pi/2+0.7)$. For the displayed periods, the path-specific ceiling $h_{\rm path}(T)=\max_b h(\vartheta_b)$ is at most $0.52262$ bits/symbol, so $h_{\rm path}(T)<\gamma<1-R$. This finite-path diagnostic is distinct from, and does not certify, the uniform $\bar h=\sup_{\vartheta\in\Theta_0}h(\vartheta)$ condition in Corollary~\ref{cor:ggrel}; indeed, $\sup_{\vartheta\in\Theta_{\rm num}}h(\vartheta)\approx0.78063>\gamma$. We retain $(R,\gamma)=(0.25,0.65)$, set $\phi=0.5$, and test $\mathcal Q=\{4,5,6,8,10,12,15,20,24,30\}$. Here $m=5q$ and every $q$ divides $B$. For every tracked-$q$ design, the $B/q$ refreshes use 600 pilots in total alongside 1200 data symbols, so pilots are one third of its channel uses and the realized rate is $0.15480$ bits per total channel use. The oracle and static laws are reference baselines and do not use these pilot observations. Pilot/data observations are independent. The oracle, tracked, and static laws are $p_{\vartheta_b,n}$, $p_{\widehat\vartheta_b,n}$, and $Q_{b,n}^{\rm init}=p_{\vartheta_1,n}$, respectively.

Figure~\ref{fig:dynamic}(b) reports $\widehat{\bar e}_{\vartheta}(q;T)$ for $T\in\{120,60,30\}$ with nominal pointwise 95\% Student-$t$ intervals over 64 independent pilot repetitions. Calibration uses the four corners of $[1.08,1.92]\times[-0.28,0.28]$, the center $(1.50,0)$, all $m=5q$ for $q\in\mathcal Q$, and 64 repetitions. Distinct, non-reused deterministic stream initializations give $\widehat C_{\rm est}=3.599$, the maximum of the 50 nominal pointwise 95\% Student-$t$ upper endpoints for $\sqrt m\,\E\|\widehat\vartheta_m-\vartheta\|$. These limits are neither simultaneous nor uniform over $\Theta_{\rm num}$ and do not verify \eqref{eq:rootm}; they define only the plug-in proxy $\widehat q_{\rm proxy}=[\widehat C_{\rm est}/(2v\sqrt{\phi n})]^{2/3}$. The constrained estimate reaches a boundary in $57.5\%$ of calibration fits and $52.9\%$ of the decoding-study fits, so the results characterize this specified constrained estimator.

Let $\widehat q_{\rm track}=\min\arg\min_{q\in\mathcal Q}\widehat{\bar e}_{\vartheta}(q;T)$. For $T\in\{120,80,60,40,30\}$, Fig.~\ref{fig:dynamic}(c) shows sample argmins $24,12,10,10,8$ as $v$ increases from $0.01768$ to $0.07058$, while $\widehat q_{\rm proxy}=12.75,9.73,8.03,6.13,5.07$. The vertical bars are the 2.5--97.5 percentiles of 5000 bootstrap argmins obtained by resampling repetitions independently within each $q$. These ranges show that the sample argmins are not uniquely resolved; neither sequence is claimed to minimize finite-block decoding error.

Figure~\ref{fig:dynamic}(d) reports $10^3\widehat{\Delta P}_{\mathrm e}^{\rm tr}(q;T)$ for $T\in\{120,40\}$ with nominal pointwise paired 95\% Student-$t$ intervals over 64 independent pilot/data repetitions. Color-matched dotted lines and bands give the $Q_{b,n}^{\rm init}$ means and intervals; the two static baselines nearly coincide. The smallest observed means over $\mathcal Q$ are $(1.097\pm0.175)\times10^{-3}$ at $q=20$ and $(2.056\pm0.184)\times10^{-3}$ at $q=8$, respectively; neighboring nominal pointwise intervals overlap, so no unique optimizer is inferred. The corresponding static means are $(2.821\pm0.059)\times10^{-3}$ and $(2.797\pm0.056)\times10^{-3}$. Random codebooks are not simulated.

\section{Conclusion}

Channel variation can make a posterior-based soft GRAND query order mismatched to the current channel even when the codebook is unchanged. Positive log-posterior mismatch bounds the amount by which log query rank can exceed matched posterior self-information; combined with exact random-subset collision probabilities, this yields finite-budget GRANDAB error control. For within-block switching, the state-path posterior mixture has sublinear mismatch when $S_n\log n=o(n)$. For block-to-block drift, pilot refresh bounds average posterior mismatch by $L_{\rm post}(\varepsilon_m+qV_B/B)$; for $v>0$ and under $\varepsilon_m\le C_{\rm est}m^{-1/2}$, the continuous minimizer of the tracking upper-bound proxy is $q^\star=(C_{\rm est}/(2v\sqrt{\phi n}))^{2/3}$. Small-block GGN-BPSK experiments numerically evaluate both constructions.

\bibliographystyle{IEEEtran}
\bibliography{refs}

\end{document}